\DeclareMathOperator*{\argmax}{arg\,max}
\DeclareMathOperator*{\argmin}{arg\,min}
\begin{document}

\newtheorem{definition}{Definition}

\newcommand\relatedversion{}

\title{\Large Sensor Placement for Learning in Flow Networks \relatedversion}
\author{Arnav Burudgunte\thanks{Rice University, \{ab141,arlei\}@rice.edu}
\and Arlei Silva\footnotemark[1]}

\date{}

\maketitle


\fancyfoot[R]{\scriptsize{Copyright \textcopyright\ 2024 by SIAM\\
Unauthorized reproduction of this article is prohibited}}





\begin{abstract} \small\baselineskip=9pt Large infrastructure networks (e.g. for transportation and power distribution) require constant monitoring for failures, congestion, and other adversarial events. However, assigning a sensor to every link in the network is often infeasible due to placement and maintenance costs. Instead, sensors can be placed only on a few key links, and machine learning algorithms can be leveraged for the inference of missing measurements (e.g. traffic counts, power flows) across the network. This paper investigates the sensor placement problem for networks. We first formalize the problem under a flow conservation assumption and show that it is NP-hard to place a fixed set of sensors optimally. Next, we propose an efficient and adaptive greedy heuristic for sensor placement that scales to large networks. Our experiments, using datasets from real-world application domains, show that the proposed approach enables more accurate inference than existing alternatives from the literature. We demonstrate that considering even imperfect or incomplete ground-truth estimates can vastly improve the prediction error, especially when a small number of sensors is available. 
\end{abstract}\\

\noindent\textbf{Keywords:} Networks, graphs, sensor placement, semi-supervised learning, active learning.

\section{Introduction}

This paper addresses the problem of effectively placing sensors to estimate edge flows in networks. Given a flow value at each edge, our goal is to choose a set of edges to monitor in order to infer values at unmonitored edges. Versions of our problem appear in infrastructure networks, such as traffic \cite{activelearning}, water \cite{outbreakdetection}, and power \cite{samudrala2019power} networks. We assume that each edge in the network has some associated value (e.g. traffic flow, water flow, electrical current) that must be measured to track congestion, anomalies, and other events. Owing to the size of the network and measurement cost, we often cannot place a sensor at every edge. A common solution is to place sensors at a small subset of the edges and estimate the rest using semi-supervised learning \cite{sampling, activelearning, spectraltransforms}. Because the choice of this subset can seriously alter the prediction quality, our goal is to efficiently select a subset that yields the most accurate estimate. We focus here on values obeying the assumption \textit{flow conservation}, meaning the sum of flows into a vertex is (approximately) equal to the sum of flows out of it. 

The sensor placement problem is related to two machine learning paradigms. The first is \textit{semi-supervised learning (SSL)}, in which models learn from both labeled and unlabeled data. SSL is motivated by applications where the availability of labeled data is limited \cite{graphbasedssl, labelprop}. Graph-based SSL is useful when a dataset can be encoded as a graph with labels on nodes or edges; the unlabeled data can be leveraged by propagating labels through the graph structure.  

The second machine learning paradigm related to sensor placement is \textit{active learning}, in which a model iteratively makes requests for specific observations to be labeled \cite{settles2009active,cohn1996active,cohn1994improving}. When applied to graph data, active learning can take advantage of the graph structure to select samples to be labeled more effectively \cite{bilgic2010active,cesa2010active,gu2012errormin}. 
Such algorithms often assume no prior information about the ground-truth labels at first and slowly learn more about the network as more data is labeled. In practice, however, we often have outside information about the edge labels before placing any sensors. For example, urban planners have detailed predictions for traffic flow based on traffic demand models, spatial and temporal data, and other factors \cite{trafficdemand, SUMO, li2021spatial}. Such estimates can be extremely useful for inferring labels with a small number of sensors because they significantly constrain the possible solutions---there might be many possible predictions that satisfy the flow conservation assumption, but only a few will correctly predict the volume of traffic at major roads. As we will show, even an imperfect estimate of edge values can generate a more effective choice of sensors than a naive algorithm that makes predictions based on flow conservation alone.

The question we address in this paper is \textit{how to select a small number of edge flows to be monitored such that the error of flow predictions for the remaining of the network is minimized}. As the problem is NP-hard, we propose an efficient algorithm that greedily selects sensors to minimize the error of a flow conservation-based prediction, combining knowledge of the network topology and ground-truth flow values. However, the efficiency of our algorithm depends on the efficient evaluation of possible sensor locations. We achieve this objective via two optimizations, the lazy evaluation of candidate sensors and the fast evaluation of the benefit of adding each sensor using the Woodbury matrix identity. Our experiments evaluate the proposed algorithm in terms of accuracy and running time and show that it outperforms several alternatives. 

We summarize our main contributions as follows:
\begin{itemize}
\item We provide a formal statement of the sensor placement problem under flow conservation, with proof that the problem is NP-complete;
\item We propose an efficient greedy heuristic for sensor placement that combines two optimizations to speed up the computation of each greedy decision;
\item We compare the proposed heuristic against several alternatives using real networks and the experimental results show that our approach is more accurate while the proposed optimizations lead to significant savings in computing time.

\end{itemize}

\section{Related Work}

\paragraph{\textbf{Graph-Based Semi-Supervised Learning}} Graph-based SSL is a special case of SSL where observations can be encoded as vertices and edges represent relations between them \cite{spectraltransforms}. Label propagation \cite{labelprop, labelprop2} is the most popular technique for reconstructing smooth values. More recently, flow-based SSL \cite{activelearning} was proposed to infer conserved flows on the edges rather than smooth vertex labels. However, the conservation-based solution can be highly dependent on the choice of labeled edges and the nature of the data \cite{labelprop, activelearning,da2020combining}, which is a motivation for our work. We focus on the problem of selecting edges to be labeled (or sensor locations, in the case of infrastructure networks), which is a special case of the active learning problem \cite{settles2009active}. 

\paragraph{\textbf{Active Learning on Graphs}} With the exception  of \cite{activelearning}, active learning on graphs has been studied exclusively on the vertices under a smoothness assumption. Cut-based methods \cite{guillory2009label}, error bound minimization \cite{gu2012errormin}, and sampling (signal processing) \cite{samplingtheory} methods have been proposed as active learning strategies. In all these algorithms, the choice of nodes or edges is made using only the network topology, and thus cannot be optimized using the ground-truth values \cite{guillory2009label, activelearning}. 
Such methods apply the eigenvectors of the Laplacian matrix as a basis for vertex signals to sample vertices as to enforce smoothness. 
Conversely, the corresponding basis for edge flows----the singular vectors of the incidence matrix----has many vectors with frequency 0, making it difficult to prioritize samples. We thus assume that edges are chosen by optimizing a reconstruction loss. We show that this approach allows us to match the performance of existing baselines using far fewer sensors. 

\paragraph{\textbf{Traffic Forecasting}} The problem of forecasting traffic flows, speeds, and other attributes is can be motivated by smart city applications. Early attempts at traffic prediction infer network flows from a set of origin-destination pairs \cite{trafficdemand} or vice versa \cite{Hazelton}; the popular traffic simulator SUMO \cite{SUMO} operates using a similar principle. More recently, deep learning has been used to predict traffic as a function of spatial and temporal data  \cite{li2021spatial, peng2020spatial, song2020spatial}. The relevant result for our work is that there are sophisticated, accurate models available for predicting traffic flow in graphs even without labels from sensors. We show that even when such estimates are noisy or imperfect, they provide useful information about ground-truth flows. We focus on the spatial dimension of the problem and use simpler inference models such as label propagation, but our algorithms could be generalized to use any forecasting model. 

\paragraph{\textbf{Sensor Placement and Influence Maximization}} Previous work has studied sensor placement for other objectives such as monitoring spatial phenomena \cite{krause2006near} and detecting contaminants in water distribution networks \cite{krause2008efficient, outbreakdetection}. Similar to these problems, we study sensor placement as a combinatorial optimization problem, though we minimize a prediction error rather than a fixed penalty function. Closely related to sensor placement is the influence maximization problem, which asks for the best subset of nodes to target for influence such that after some diffusion process, the maximum number of nodes have been influenced \cite{kempe2003maximizing}. Solutions often depend on the diffusion model, which predicts the number of nodes affected by the chosen seed set \cite{li2018influence}. Instead of using a process, we apply a machine learning model to make predictions based on a validation set (i.e. set of edge observations). Moreover, we notice that, although we apply similar greedy algorithms as those previously used for sensor placement and influence maximization, our objective functions are not submodular \cite{nemhauser1978analysis}.

\section{Problem Definition}

\subsection{Preliminaries}

We represent a  network as an unweighted graph $G = (V, E)$ with a set $V$ of $n$ vertices and set $E$ of $m$ edges. The graph is represented by the adjacency matrix $\mathbf{A} \in \mathbb{R}^{n \times n}$, where $\mathbf{A}_{ij} = 1$ if an edge exists between vertices $i$ and $j$ and $\mathbf{A}_{ij} = 0$ otherwise. 

We are interested in the value given by vector $\mathbf{f} \in \mathbb{R}^m$ where $\mathbf{f}_i$ is the flow at edge $i$. We represent the values of a subset $E' \subset E$ as a vector $\mathbf{x}_{E'} \in \mathbb{R}^{|E'|}$. Thus a set of labeled (sensor) edges $S$ is represented by the corresponding value $\mathbf{f}_S \in \mathbb{R}^{|S|}$, and a set of unlabeled (target) edges $T$ is represented as $\mathbf{f}_T \in \mathbb{R}^{|T|}$. 

\subsection{General Sensor Placement Problem}

The sensor placement problem can be divided into two parts. The first part predicts the values of unlabeled edges based on a given set of sensors. The second is choosing a set of sensors that yields the prediction with the lowest error. We formalize both parts here. 

\subsubsection{Prediction}
\label{sec:prediction}

Given a labeled set of $k$ sensors $S$ and corresponding observed flows $\mathbf{f}_{S} \in \mathbb{R}^k$, we produce an estimate $\mathbf{\hat{f}}$ for $\mathbf{f}$ via graph-based semi-supervised learning. The divergence on each vertex $i$ is defined as the difference between flows out of $i$ and flows into $i$:

\begin{equation}
    \label{eqn::divergence}
    \text{div}(i) = \sum_{e \in E : e \text{ out of } i} \mathbf{f}_e - \sum_{e \in E : e \text{ into } i} \mathbf{f}_e
\end{equation}

Flows for missing edges are estimated by minimizing the sum of squared divergences given by

\begin{equation}
    \label{conservation}
    ||\mathbf{Bf}||^2 = \sum_{i \in V} (\text{div}(i))^2
\end{equation}
where the incidence matrix $\mathbf{B} \in \mathbb{R}^{n \times m}$ is defined as
\begin{equation}
\mathbf{B}_{ij} =
    \begin{cases}
        1 & \text{if edge } e_j \text{ enters node } i\\
        -1 & \text{if edge } e_j \text{ leaves node } i\\
        0 & \text{otherwise.}
    \end{cases}
\end{equation}
For undirected graphs, we first choose an arbitrary orientation for each edge before constructing $\mathbf{B}$. We then minimize the sum of the divergence and a regularization term parameterized by $\lambda \in \mathbb{R}_+$:

\begin{equation}
\begin{aligned}
    \mathbf{\hat{f}}^* &= \argmin_{\mathbf{\hat{f}} \in \mathbb{R}^{m}} ||\mathbf{B\hat{f}}||^2 + \lambda^2 \cdot ||\mathbf{\hat{f}}||^2 \\
    &\textrm{s.t. } \mathbf{\hat{f}}_{S} = \mathbf{f}_{S} 
    \label{flowprediction}
\end{aligned}
\end{equation}
where $\lambda$ guarantees that the solution is unique. 

The resulting optimization problem can be rewritten as a regularized least squares. Define $\mathbf{f}_S^0 \in \mathbb{R}^m$ such that $\mathbf{f}^0_i = \mathbf{f}_i$ if $i \in S$ and $\mathbf{f}^0_i = 0$ otherwise. Let $\mathbf{H}_T \in \mathbb{R}^{m\times (m-k)}$ be a matrix (map) such that $\mathbf{H}_{ij}=1$ if flow $\mathbf{f}_i$ maps to $(\mathbf{f}_T)_j$ (i.e., they correspond to the same edge). The least-squares formulation is \cite{chen2010nonnegativity,da2020combining}:

\begin{equation}
    \label{eqn:flowleastsquares}
    \mathbf{f}_T^* = \argmin_{\mathbf{f}_T \in \mathbb{R}^{m-k}} ||\mathbf{B}\mathbf{H}_T\mathbf{f}_T-\mathbf{B}\mathbf{f}_S^0||^2+\lambda^2 \cdot  ||\mathbf{f}_T||^2 
\end{equation}

\subsubsection{Sensor Placement}

Given a set $T$ of target vertices, a set $C$ of candidate vertices, and budget $k$, our problem is to choose the subset of $k$ vertices in $C$ that yields the best prediction for $T$:

\begin{equation}
\begin{aligned}
    S^* = & \argmin_{S \subseteq C,  |S| = k} \quad ||\mathbf{\hat{f}}_T - \mathbf{f}_T||^2 \\
    \textrm{s.t.} \quad &\hat{\mathbf{f}}_T = \phi(\mathbf{f}_S, S, \lambda)
    \label{eqn::sensorplacement}
\end{aligned}
\end{equation}
where $\phi$ is the prediction model (see Section \ref{sec:prediction}).  

\subsection{Hardness}
\label{sec::hardness}

\begin{figure}
    \begin{center}
            \includegraphics[width=0.49\textwidth]{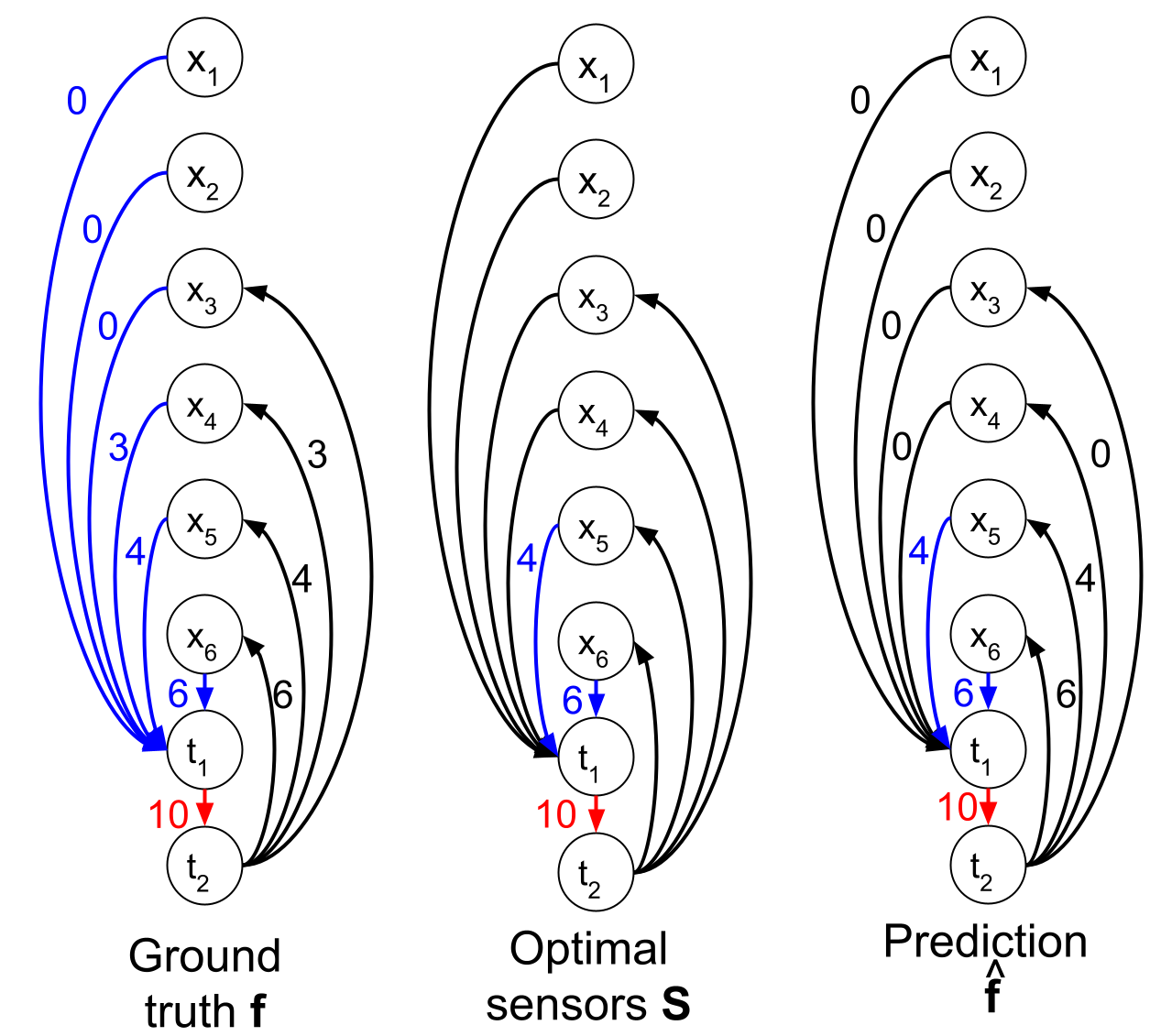}
    \end{center}
    \caption{Reducing \emph{SUM} to \emph{SENSOR}, with $X\!=\!\{3,4,6\}$ and $t\!=\!10$, candidate set $C\!=\!X$ (blue), and the target $T\!=\!\{(t_1, t_2)\}$ (red). Choosing the two edges that sum to $t$ for $S$ generates a perfect prediction for $T$.}
    \label{fig::reduction}
\end{figure}

We now formally define the sensor placement problem and show that it is NP-complete. 

\begin{definition}[The Sensor Placement Problem]
Given a graph $G$, a candidate set of edges $C \subseteq E$, a target set of edges $T \subseteq E$, a budget $k$, and an error $\epsilon$, \emph{SENSOR$(G, C, T, k)$} consists of determining whether there exists a set of edge labels $S \subseteq C$ such that $|S| = k$ and the edge predictions $\mathbf{\hat{f}}$ for $S$ (see Equation \ref{eqn::sensorplacement}) have an error $||\mathbf{\hat{f}}_T - \mathbf{f}_T || \leq \epsilon$.
\end{definition}

\begin{theorem}
Sensor Placement is NP-complete. 
\label{thm:hardness}
\end{theorem}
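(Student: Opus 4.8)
The statement requires two things: that \emph{SENSOR} lies in NP, and that it is NP-hard. Membership in NP is the easier half, and I would dispatch it first. A certificate is simply the set $S \subseteq C$ with $|S| = k$; to verify it, observe that the predictor $\phi$ of Section \ref{sec:prediction} is the minimizer of the convex regularized least-squares problem \eqref{eqn:flowleastsquares}, obtained by solving the linear system $((\mathbf{B}\mathbf{H}_T)^\top \mathbf{B}\mathbf{H}_T + \lambda^2 \mathbf{I})\mathbf{f}_T = (\mathbf{B}\mathbf{H}_T)^\top \mathbf{B}\mathbf{f}_S^0$. For rational inputs this system is solved exactly by Gaussian elimination in polynomial time and with polynomially bounded bit-length, after which the test $\|\hat{\mathbf{f}}_T - \mathbf{f}_T\|^2 \le \epsilon^2$ is a single rational comparison. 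Hence a guessed $S$ can be checked in polynomial time and \emph{SENSOR} $\in$ NP.

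For hardness I would reduce from \textsc{Subset Sum}: given positive integers $X = \{x_1,\dots,x_n\}$ and a target $t$, decide whether some subset of $X$ sums to $t$. Padding $X$ with zero-valued elements lets me fix the selected cardinality to the budget $k$, so the constraint $|S| = k$ in \eqref{eqn::sensorplacement} is harmless. The gadget is the one sketched in Figure \ref{fig::reduction}: each $x_i$ becomes a candidate edge whose outer endpoint is a pendant (degree-one) vertex and whose ground-truth flow is $x_i$; all of these edges feed a single junction vertex, out of which the unique target edge leaves, carrying ground-truth flow $t$. The role of the pendant endpoints is that an \emph{unselected} candidate edge is driven to flow $0$ by the divergence penalty at its leaf, whereas a \emph{selected} one is pinned to $x_i$ by the hard constraint $\hat{\mathbf{f}}_S = \mathbf{f}_S$; flow conservation at the junction then forces the predicted target flow to equal $\sum_{i \in S} x_i$.

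With this gadget I set $\epsilon = 0$. If some $k$-subset $S$ satisfies $\sum_{i \in S} x_i = t$, the prediction on the target equals its true value $t$ and the error is exactly $0$, so the \emph{SENSOR} instance is a yes-instance; conversely, any $S$ with error $\le \epsilon = 0$ must predict exactly $t$ on the target, which by the conservation argument means $\sum_{i \in S} x_i = t$. Since the construction is plainly polynomial, this gives NP-hardness, and together with membership, NP-completeness.

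The main obstacle is making the middle step rigorous: I must show that the minimizer of \eqref{eqn:flowleastsquares} assigns \emph{exactly} flow $0$ to every unselected candidate edge and exactly $\sum_{i \in S} x_i$ to the target, rather than some blurred compromise introduced by the $\lambda^2\|\hat{\mathbf{f}}\|^2$ term. I expect to handle this by engineering the relevant subgraph to be acyclic (a tree/forest), so that once the pendant and sensor edges are fixed the conservation constraints determine all remaining flows uniquely with zero residual divergence; the minimum-norm solution then coincides with this unique conserved flow for \emph{every} $\lambda$, decoupling the analysis from the regularizer. Care is also needed at the far endpoint of the target edge so that its divergence constraint does not spuriously pull the target flow back toward $0$; adding a balancing sink edge there is the kind of detail I would verify explicitly.
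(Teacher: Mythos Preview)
Your proposal is essentially the paper's proof: NP membership by evaluating the least-squares predictor on a guessed $S$, and hardness via the same \textsc{Subset Sum} reduction in which candidate edges carrying the values $x_i$ feed a common junction whose outgoing target edge carries $t$, with zero-flow padding to meet the cardinality $k$ and $\epsilon=0$. The paper's gadget differs only cosmetically---it adds a second vertex $t_2$ with back-edges to the element vertices rather than using pendant leaves---and is in fact less careful than you are about the regularizer: it simply asserts that ``since there is only one edge out of $t_1$, that edge must have flow $t$'' without verifying that the regularized minimizer of \eqref{eqn:flowleastsquares} actually zeroes the unselected candidates, so the obstacle you flag is glossed over in the original as well.
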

\begin{proof}
Given a certificate $S \subseteq C$ of edges selected as sensors, we can clearly compute $\mathbf{\hat{f}}$ and check the error in polynomial time. Let \emph{SUM}$(X, t)$ be an instance of the subset sum problem for a finite set $X \subseteq \mathbb{Z}_+$ and $t \in \mathbb{Z}_+$. The problem is to find a subset $X' \subset X$ such that $\sum_{x \in X'}x = t$. This problem is NP-hard \cite{clrs}. 

We can reduce an arbitrary instance of \emph{SUM} to \emph{SENSOR} by constructing graph $G$ as follows. Create a vertex for each element in $X$ and two additional vertices $t_1$ and $t_2$. Add a set of edges $C = \{(x, t_1) : x \in X\}$, each with a flow equal to $s$; an edge $(t_1, t_2)$ with flow $t$; and set of edges $\{(t_2, x): x \in X\}$ each with flow $x$. Finally, add $|X|$ vertices with 0 flow into $t_1$. 

There exists a solution to \emph{SENSOR}$(G, C, \{(t_1, t_2)\}, k)$ with prediction error $\epsilon = 0$ iff one exists for $\emph{SUM}(X, t)$. The equivalence is straightforward: if some set $X'$ sums to $t$, choose edges $\{(x', t_1): x' \in X'\}$, whose flows also sum to $t$, as sensors. If $|X'| < k$, choose edges with 0 flow for the remaining sensors. Since there is only edge out of $t_1$, that edge must have flow $t$. Conversely, if $k$ sensors correctly predict the flow on $(t_1, t_2)$, their flows must sum to $t$, and thus provide a solution to \emph{SUM}. 
\end{proof}

\subsection{Submodularity}
\label{sec::submodularity}

The submodularity property \cite{nemhauser1978analysis} is often applied to provide approximation guarantees for poly-time algorithms that solve sensor placement and other NP-hard related problems on graphs \cite{kempe2003maximizing,outbreakdetection}. However, it is straightforward to show that our sensor placement objective is not submodular. This is because the objective from Equation \ref{eqn::sensorplacement} is not monotonically decreasing. The addition of a new sensor might increase the error of the flow predictions. For example, consider a triangle where only one edge $e$ has a nonzero flow. With no sensors, Equation \ref{eqn:flowleastsquares} predicts $\mathbf{\hat{f}} = 0$, resulting in an error of $|\mathbf{f}_e|^2$. After choosing $e$ as a sensor, the prediction becomes $\mathbf{f}_e$ everywhere, and the error increases to $2|\mathbf{f}_e|^2$. One could consider a simplified version of our problem as a maximum flow coverage problem, where the goal is to place the sensors as to maximize the total flow covered by the sensors. This simplified version of the problem is submodular.

\section{Methods}

\subsection{Greedy Heuristic}

Given our hardness result (Theorem \ref{thm:hardness}), we propose a greedy algorithm that iteratively selects the sensor that minimizes the prediction error in Equation \ref{eqn::sensorplacement}. The pseudocode is given in Algorithm \ref{alg::greedy}. At each of the $k$ iterations (lines 2-11), the algorithm selects the best new sensor $s^*$ based on its resulting prediction error $\epsilon$ (lines 6-10).

\begin{algorithm}[t!]
\caption{Sensor Placement($G, C, T, k, \mathbf{f}$)}
\label{alg::greedy}
\begin{algorithmic}[1]
\State $S \gets \emptyset$

\For{$i = 1,...,k$}
    \State $\epsilon_{min} \gets \infty$
    \ForAll{$s \in C-S$}
        \State $S'=S\cup \{s\}$
        \State $\epsilon = ||\phi(\mathbf{f}_{S'}, S', \lambda) - \mathbf{f}_T||_2^2$
        \If{$\epsilon <  \epsilon_{min}$}
            \State $\epsilon_{min} \gets \epsilon$
            \State $s^* = s$
        \EndIf
    \EndFor
    \State $S \gets S \cup \{s^*\}$
\EndFor

\noindent \Return $S$
\end{algorithmic}
\end{algorithm}

The flow prediction problem is not submodular (see Section \ref{sec::submodularity}). In practice, however, it is almost always possible to add a sensor that decreases the prediction error, and thus the problem exhibits (empirical) diminishing return (see Figure \ref{fig::flowcorrelations}). This property enables our algorithm to work well in practice.

At each of the $k$ iterations, our algorithm evaluates the benefit of every not yet chosen edge in $C$. This requires $O(m)$ solutions of the least-squares problem in Equation \ref{eqn:flowleastsquares}. Each instance takes $O(m^3)$ time to solve, giving an overall running time of $O(m^5)$. We now propose two strategies to speed up this algorithm. 

\subsection{Lazy Evaluation}
\label{sec:lazyevaluation}
Previous works have proposed lazy evaluation to speed up greedy algorithms for submodular optimization \cite{outbreakdetection}. As discussed in Section \ref{sec::submodularity}, our sensor placement problem is not submodular but the associated coverage problem---maximizing the total flow coverage---is submodular. Based on this ``near-submodularity'' property, we propose applying lazy evaluation to speed up Algorithm \ref{alg::greedy}. More specifically, we will assume that (1) adding a sensor does not increase the flow prediction error and (2) the benefit of adding a sensor at a later iteration does not increase its benefit (i.e. its flow prediction error reduction). 


The lazy evaluation works as follows. Before choosing any sensors, we compute the benefit of each edge and store the results in a heap. At each iteration, we recompute the benefit of the best remaining edge. If it remains the best, then it is highly unlikely any other edge can overtake it, and we can choose that edge without reevaluating any other benefits. This saves up to $|C|$ evaluations per iteration. In case the evaluation decreases the rank of the top edge, we evaluate the next edge returned by the heap until the rank is unchanged.

\subsection{Recursive Computation}
\label{sec:recursivecomputation}
We also propose a recursive computation to speed up the predictions themselves after a new labeled edge is revealed. The goal is to avoid solving Equation \ref{eqn:regularizedflowprediction} from scratch after a single new sensor is introduced. Instead, we will re-use results from previous iterations to quickly update flow predictions as new sensors are selected.

Our approach is similar to the recurrence for vertex labels given in \cite{zhu2005ssl} but works edge flows. To simplify the notation, let $\mathbf{X}_T = \mathbf{B}\mathbf{H}_T$. Then we can rewrite the closed form solution to Equation \ref{eqn:flowleastsquares} as 
\begin{equation*}
    \label{eqn:regularizedflowprediction}
    \mathbf{\hat{f}}_T = -(\mathbf{X}_T^T\mathbf{X}_T + \lambda^2\mathbf{I})^{-1}\mathbf{X}_T^T\mathbf{B}\mathbf{f}_S^0
\end{equation*}
which requires a matrix inversion.

We first compute an LU-decomposition of $\mathbf{X}_T^T\mathbf{X}_T + \lambda^2\mathbf{I}$ and use it to solve for $\mathbf{\hat{f}}_T$. Now suppose we select edge $e_i = e_{T_j}$, updating $S' \gets S \cup \{e_i\}$ and $T' \gets T - \{e_i\}$. The update is equivalent to removing the $j$th column of $\mathbf{X}_T$ or removing the $j$th row and column of $\mathbf{X}_T^T\mathbf{X}_T$. This is a rank two update and can be written in terms of matrices $\mathbf{U}, \mathbf{V} \in \mathbb{R}^{t \times 2}$ ($t = |T|$) defined as
\begin{equation*}
    \mathbf{U} = \begin{pmatrix}
        -1 & 0 \\
        0 & (\mathbf{X}_T^T\mathbf{X}_T)_{j, 1} \\
        \vdots & \vdots \\
        0 & (\mathbf{X}_T^T\mathbf{X}_T)_{j, t} \\
    \end{pmatrix}, \mathbf{V} = \begin{pmatrix}
        0 & -1\\
        (\mathbf{X}_T^T\mathbf{X}_T)_{1, j} & 0\\
        \vdots & \vdots \\
        (\mathbf{X}_T^T\mathbf{X}_T)_{t, j} & 0\\
    \end{pmatrix}
\end{equation*}
Further, let us define the downsampling matrix $\mathbf{S}_{j} \in \mathbb{R}^{(t-1) \times t}$ such that $\mathbf{S}_j\mathbf{X}_T$ removes the $j$th row of $\mathbf{X}_T$. It follows that
\begin{equation*}
    (\mathbf{X}_{T'}^T\mathbf{X}_{T'})^{-1} = \mathbf{S}_{j}(\mathbf{X}_{T}^T\mathbf{X}_{T} + \mathbf{U}\mathbf{V}^T)^{-1}\mathbf{S}_{j}^T
\end{equation*}

We now use the previously-computed LU-decomposition to solve for $\mathbf{Y} = (\mathbf{X}_T^T\mathbf{X} + \lambda^2\mathbf{I})^{-1}\mathbf{U}$ and $\mathbf{z} = (\mathbf{X}_{T}^T\mathbf{X} + \lambda^2\mathbf{I})^{-1}\mathbf{S}_j^T\mathbf{X}_{T'}^T\mathbf{B}\mathbf{f}_{S'}^0$. Finally, we apply the Woodbury identity \cite{woodbury} as follows:
\begin{equation*}
\begin{aligned}
    \mathbf{\hat{f}}_{T'} &= -(\mathbf{X}_{T'}^T\mathbf{X}_{T'} + \lambda^2\mathbf{I})^{-1}\mathbf{X}_{T'}^T\mathbf{B}\mathbf{f}_{S'}^0 \\
    &= -\mathbf{S}_j(\mathbf{X}_{T}^T\mathbf{X}_{T}  + \lambda^2\mathbf{I} + \mathbf{U}\mathbf{V}^T)^{-1}\mathbf{S}_j^T\mathbf{X}_{T'}^T\mathbf{B}\mathbf{f}_{S'}^0 \\
    &= -\mathbf{S}_j(\mathbf{z} - \mathbf{Y}(\mathbf{I} + \mathbf{V}^T\mathbf{Y})^{-1}\mathbf{V}^T\mathbf{z})
\end{aligned}
\end{equation*}

As a result, the only matrix that must be inverted directly is $\mathbf{I} + \mathbf{V}^T\mathbf{Y}$, but since this is a $2 \times 2$ matrix the inversion can be done in constant time. The LU-decomposition of $\mathbf{X}_T^T\mathbf{X}_T$, which is performed $O(m)$ times, has time complexity $O(m^3)$ \cite{bunch1974ludecomposition}. Solving a linear system using the decomposition takes $O(m^2)$ time, giving a new overall complexity of $O(m^4)$.  

\section{Experimental Results}
\label{sec:results}

We test our proposed algorithm on real-world traffic networks. 
We consider two settings. First, we assume that our method is able to fully observe flow values to place the sensors, which is the ideal scenario. Next, we consider a more realistic setting where flows are unknown or only partially known.

\subsection{Experimental Settings}

\subsubsection{Datasets} We use traffic flows on road networks from Anaheim, Barcelona, Chicago, and Winnipeg (see Table \ref{tab:networks}) \cite{TransportationNetworks}. The nodes in each network represent intersections and edges represent roads between them. Each network is represented as a directed graph where the direction of an edge corresponds to the direction of traffic flow. The flows are generated 

\begin{table}
\begin{center}
  \caption{Road Networks Used for Experiments}
  \label{tab:networks}
  \begin{tabular}{ccc}
    \toprule
    Network & $n$ & $m$\\
    \midrule
    Anaheim & 416 & 914\\
    Barcelona & 1020 & 2522\\
    Chicago & 933 & 2950\\
    Winnipeg & 1052 & 2836\\
  \bottomrule
\end{tabular}
\end{center}
\end{table}

\subsubsection{Hyperparameters} We set $\lambda = 10^{-6}$ when solving the least-squares formulation for flow prediction. 

\begin{figure}[t!]
    \begin{center}
        \includegraphics[width=0.48\textwidth]{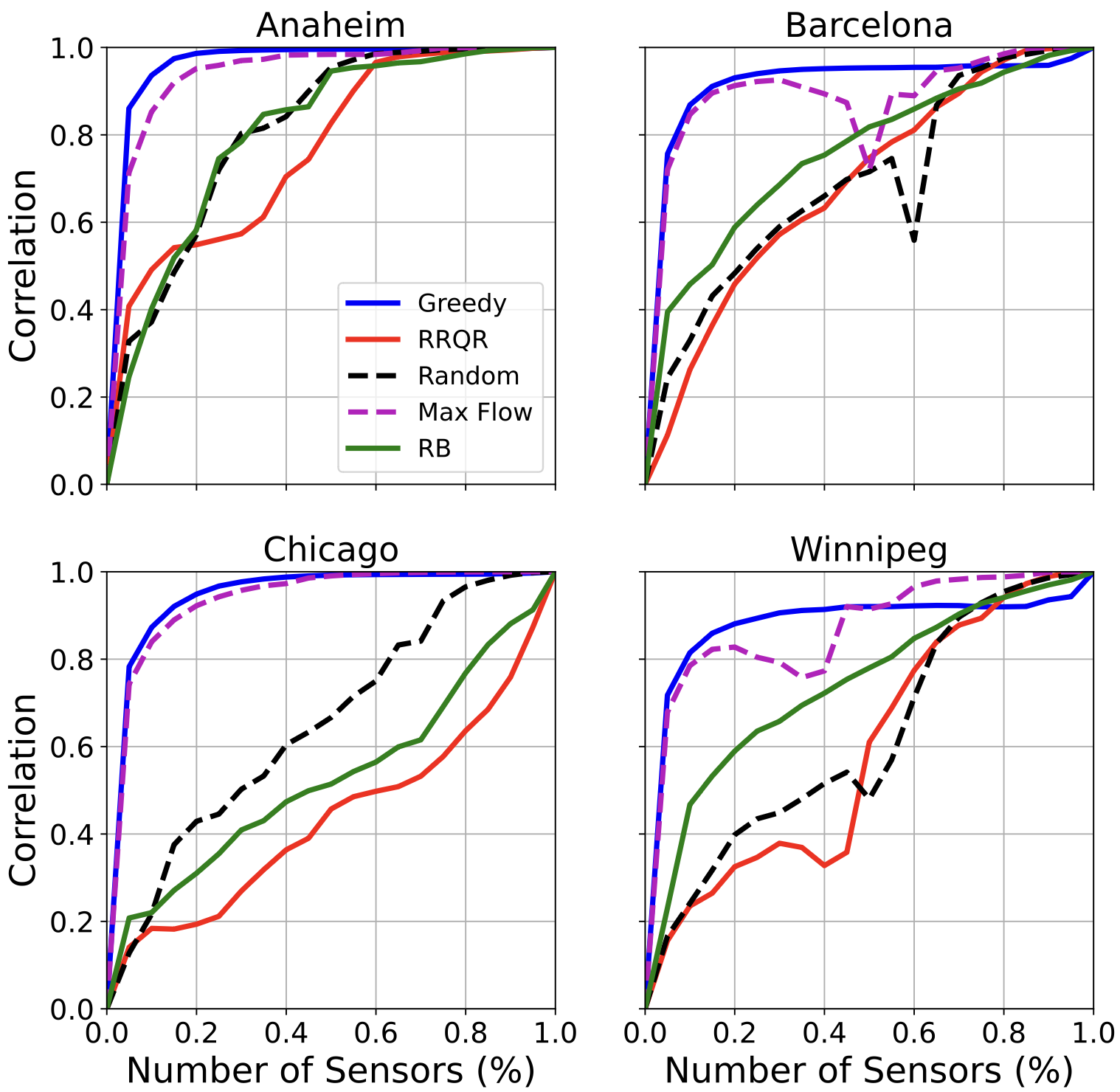}
    \end{center}
    \caption{Conserved flow prediction results when flows are fully observed for validation purposes. The plots show the correlation between the prediction $\mathbf{\hat{f}}$ and ground truth flow $\mathbf{f}$. Our greedy heuristic (Greedy) outperforms all four baselines in all datasets.}
    \label{fig::flowcorrelations}
\end{figure}

\begin{figure}[t!]
    \begin{center}
        \includegraphics[width=0.48\textwidth]{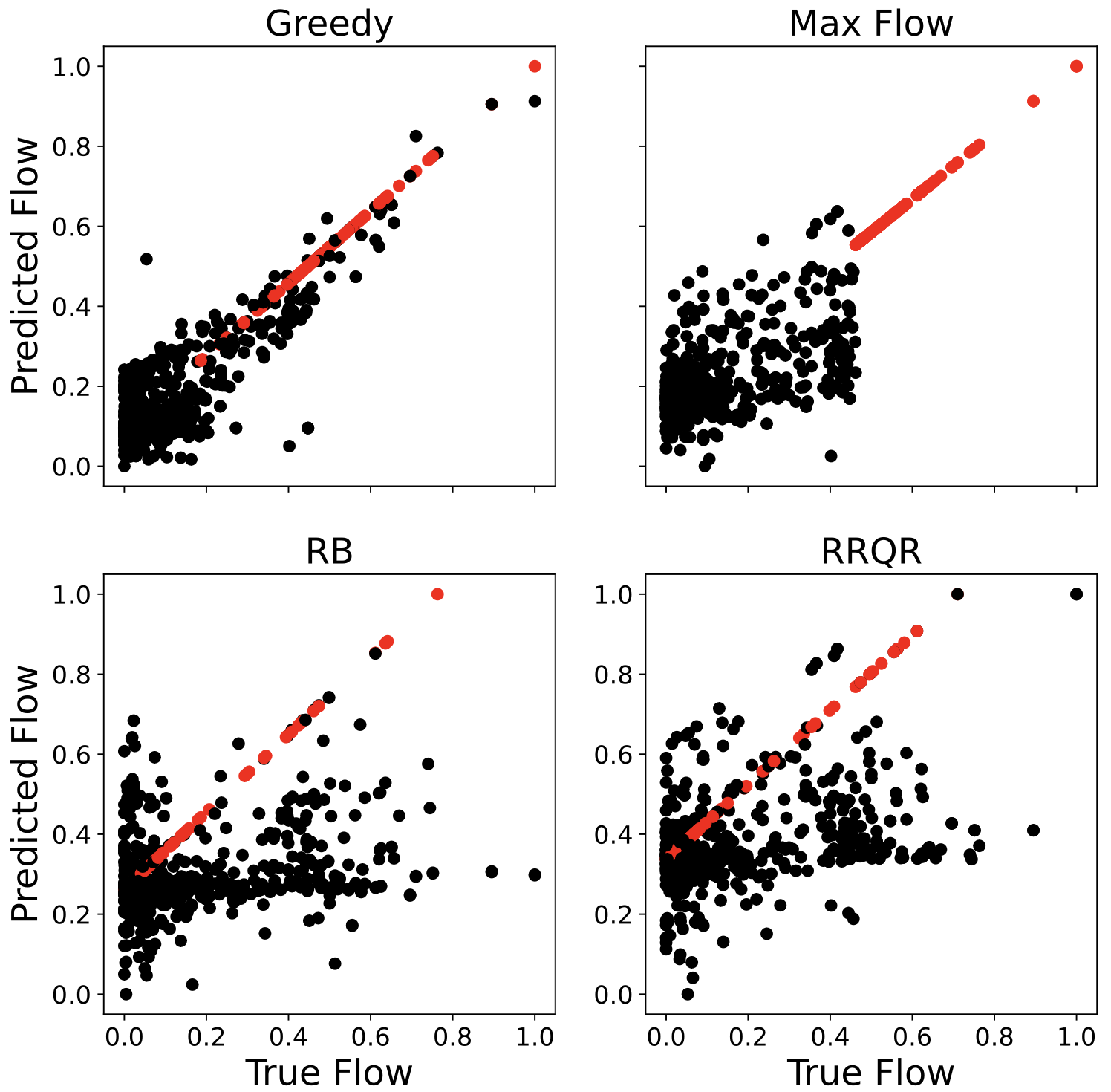}
    \end{center}
    \caption{Predicted vs. ground-truth flows for the Anaheim road network using the flow conservation algorithm (Equation \ref{eqn:flowleastsquares}) for various sensor placement algorithms. Values for the selected sensors are shown in red and inferred values in black. The budget $k$ is fixed at 10\% of all edges. Both the predicted and actual flows are normalized to [0,1]. Our method (Greedy) chooses a more representative set of edges than the baselines.}
    \label{fig::correlationplots}
\end{figure}

\subsubsection{Baselines} We compare our flow selection method with four baselines. First, random selection (Random) simply chooses the next edge uniformly at random from the currently unchosen edges. Second, recursive bisection (RB) \cite{activelearning} partitions the graph using spectral clustering and chooses edges crossing the new cut as the next sensors. The idea is to find "bottleneck" edges where major flows are concentrated, such as highways. Recursive bisection has been found to work well when much of the flow is not conserved \cite{activelearning}. Third, rank-revealing QR (RRQR) \cite{RRQR} exploits a known bound on the error of the flow conservation algorithm. If the SVD of the incidence matrix $\mathbf{B} \in \mathbb{R}^{n \times m}$ is given by $\mathbf{B} = \mathbf{U} \mathbf{\Sigma} \mathbf{V}^T$, then the $m - n + 1$ rightmost columns $\mathbf{V}_C$ of $\mathbf{V}$ give a basis for the cycle space of fully conserved flows. The $n - 1$ leftmost columns $\mathbf{V}_R$ give a basis for the cut space of flows with nonzero divergence. If $S$ is a set of $m - n + 1$ linearly independent rows of $\mathbf{V}_C$ corresponding to the selected sensors, then the reconstruction error is bounded by
\begin{equation}
\label{eqn:errorbound}
    ||\mathbf{\hat{f}} - \mathbf{f}|| \leq (\sigma^{-1}_{\text{min}}(\mathbf{V}_{SC})+1) \cdot ||\mathbf{V}_R\mathbf{f}||
\end{equation}
RRQR uses a greedy heuristic to minimize this bound by minimizing $\sigma^{-1}_{\text{min}}(\mathbf{V}_{SC})$ \cite{activelearning}. 

Finally, maximum flow (Max Flow) selects the $k$ edges with the highest flows:
\begin{equation*}
    S^* = \argmax_{S \subseteq E} ||\mathbf{f}_S||
\end{equation*}
where $\mathbf{f}_S$ is the vector of flows for edges in $S$.

\subsubsection{Evaluation Metrics:} .

\noindent\emph{Correlation (Corr):} 
\begin{equation*}
    Corr=\frac{\textit{cov}(\mathbf{\hat{f}}, f)}{\sigma(\mathbf{\hat{f}}) \cdot \sigma(\mathbf{f})}
\end{equation*}
where \emph{cov} and $\sigma$ are the covariance and std, respectively.

\noindent \emph{Mean squared error (MSE)}:
\begin{equation*}
    MSE=\frac{1}{m} ||\mathbf{f} - \mathbf{\hat{f}}||^2
\end{equation*}
\emph{Mean absolute error (MAE)}:
\begin{equation*}
    MAE=\frac{1}{m} \sum_{e \in E} |\mathbf{f}_e - \mathbf{\hat{f}}_e|
\end{equation*}
\emph{Mean absolute percentage error (MAPE)}:
\begin{equation*}
    MAPE=\frac{1}{m} \sum_{e \in E} |\frac{\mathbf{f}_e - \mathbf{\hat{f}}_e}{\mathbf{f}_e}|
\end{equation*}
\emph{Max error (MAX)}:
\begin{equation*}
    MAX=\max_{e \in E} |\mathbf{f}_e - \mathbf{\hat{f}}_e|
\end{equation*}

\begin{table*}[ht!]
{
\caption{Flow prediction error in terms of Correlation Coefficient (Corr), Mean Squared Error (MSE), Mean Absolute Error (MAE), Mean Absolute Percentage Error (MAPE), and Maximum Error (MAX). Our approach (Greedy) consistently outperforms the baselines on all metrics except for MAPE. }\label{tab:metrics}
\begin{center}
\begin{tabular}{|c|c|c|c|c|c|c|} \hline
    & & Corr ($\uparrow$) & MSE ($\downarrow$)& MAE ($\downarrow$)& MAPE($\downarrow$) & MAX ($\downarrow$) \\
    \hline
    \multirow{5}{*}{Anaheim} & Greedy & \textbf{0.936} & \textbf{0.006} & \textbf{0.051} & \textbf{194.381} & \textbf{0.452} \\
    & Max Flow & 0.852 & 0.014  & 0.076 & 264.498 & 0.577 \\
    & RB & 0.402 & 0.043 & 0.130 & 209.277 & 0.963 \\
    & RRQR & 0.493 & 0.038 & 0.121 & 246.388 & 0.818 \\
    & Random & 0.491 & 0.039 & 0.128 & 241.370 & 0.998 \\
    \hline
    \multirow{5}{*}{Barcelona} & Greedy & \textbf{0.869} & \textbf{0.008} & \textbf{0.062} & 289.385 & 0.450 \\
    & Max Flow & 0.847 & 0.009 & 0.066 & 283.837 & \textbf{0.393} \\
    & RB & 0.458 & 0.024 & 0.096 & 275.972 & 0.884 \\
    & RRQR & 0.264 & 0.024 & 0.101 & 181.948 & 0.901 \\
    & Random & 0.303 & 0.029 & 0.099 & \textbf{157.793} & 1.010 \\ 
    \hline
    \multirow{5}{*}{Chicago} & Greedy & \textbf{0.873} & \textbf{0.007} & \textbf{0.059} & 102.544 & 0.323 \\
    & Max Flow & 0.840 & 0.009 & 0.067 & 111.540 & \textbf{0.319} \\
    & RB & 0.220 & 0.023 & 0.096 & 103.362 & 0.992 \\
    & RRQR & 0.184 & 0.024 & 0.099 & 95.762 & 1.000 \\
    & Random & 0.259 & 0.022 & 0.094 & \textbf{92.863} & 0.994 \\
    \hline
    \multirow{5}{*}{Winnipeg} & Greedy & \textbf{0.815} & \textbf{0.013} & \textbf{0.081} & 217.534 & \textbf{0.561} \\
    & Max Flow & 0.785 & 0.015 & 0.087 & 213.350 & 0.591 \\
    & RB & 0.467 & 0.031 & 0.110 & 133.450 & 0.931 \\
    & RRQR & 0.235 & 0.036 & 0.118 & 116.715 &  0.961 \\
    & Random & 0.270 & 0.035 & 0.113 & \textbf{113.394} & 0.986 \\
    \hline
\end{tabular}
\end{center}
}
\end{table*}

\subsection{Fully-Observed Flows}
We assume access to the ground truth flows $\mathbf{f}$ while choosing the sensors. 

Figure \ref{fig::flowcorrelations} shows the correlation between predicted and ground-truth flows for a varying number of sensors (as a percentage). Our approach significantly outperforms the flow-agnostic baselines, especially when the number of sensors is small (20\% or less). The results also show that there is no clear best baseline among the flow-agnostic methods. In most cases, our algorithm also outperforms Max Flow. The greedy algorithm has the additional advantage of being monotone; that is, adding a sensor greedily almost always improves the prediction. This is not always the case for Max Flow. 

Table \ref{tab:metrics} shows the prediction errors in terms of all the evaluation metrics for a fixed budget of $10\%$ of the candidate sensors being selected. Our greedy solution consistently outperforms the baselines according to most metrics (Corr, MSE, and MAE) and is often competitive with the baselines in terms of MAPE and MAX. We notice that the objective applied by our algorithm is quite different from MAPE (skewed towards small flows) and MAX (skewed towards large flows).

To distinguish our greedy algorithm and Max Flow, Figure \ref{fig::correlationplots} shows the predicted vs ground-truth flows for the Anaheim dataset. The results show the superiority of our approach, especially for intermediate value flows.


\subsection{Synthetic and Noisy Flows}
\label{sec:unknownflows}


In this section, we remove the assumption that flows are fully observed. 

\subsubsection{Synthetic Flows} We generate synthetic flows under the conservation assumption (see the appendix) \cite{activelearning}. The greedy heuristic is computed using the prediction error on the synthetic flows, and the resulting sensors are tested on the true flows for the four traffic networks. The sensor placements based on synthetic flows do not always outperform the baselines (see Figure \ref{fig::noisyflows}). This is evidence that the synthetic flows are not an effective proxy for the real flows. 


\begin{figure}[ht!]
    \centering
     \begin{subfigure}{0.6\textwidth}
        \hspace{0.15\textwidth}
         \includegraphics[width=0.5\textwidth]{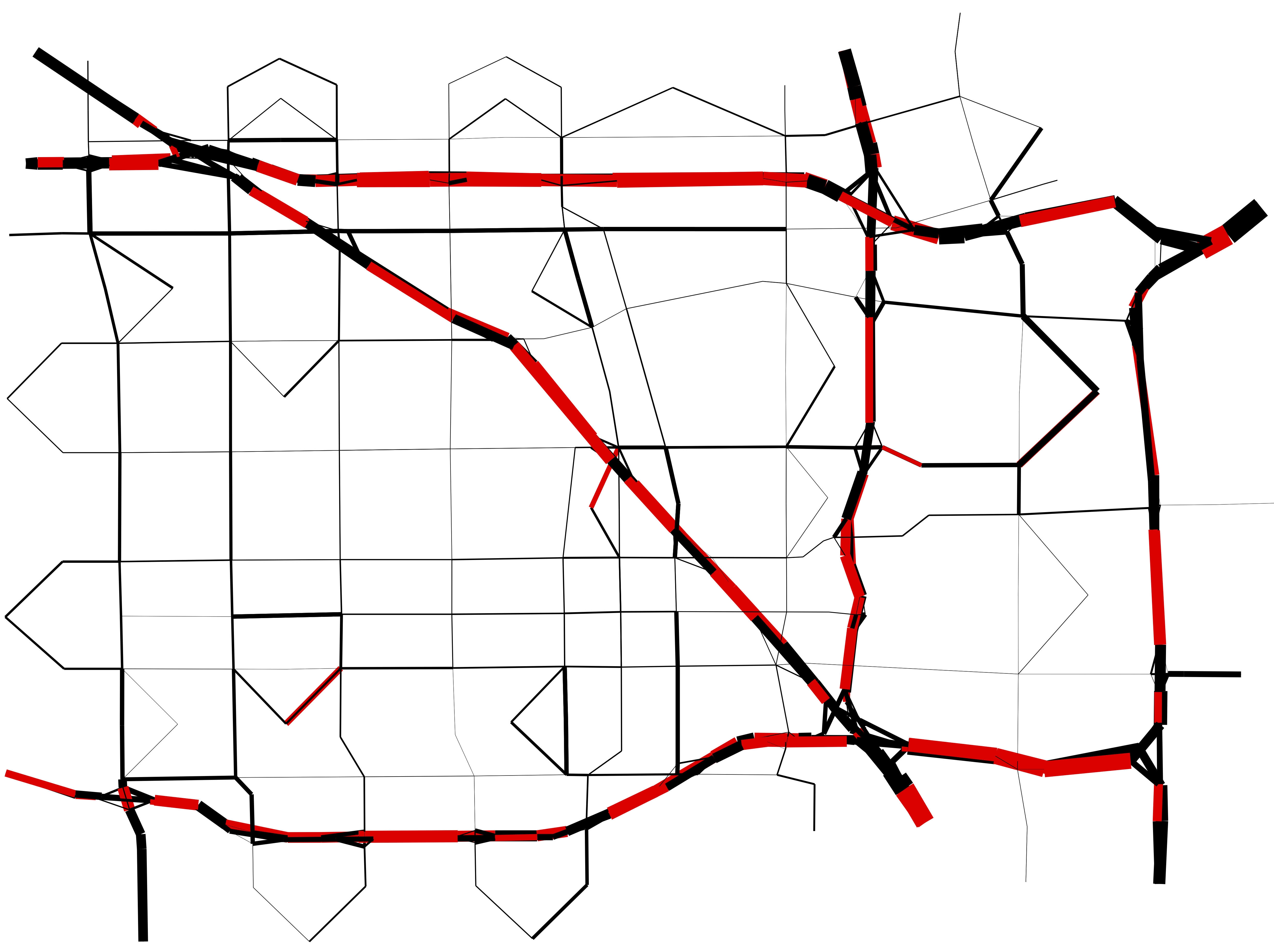}
         \caption{Greedy}
         \label{fig::anaheim_sensors}
     \end{subfigure}
     \hfill
     \begin{subfigure}[b]{0.6\textwidth}
        \hspace{0.15\textwidth}
         \includegraphics[width=0.5\textwidth]{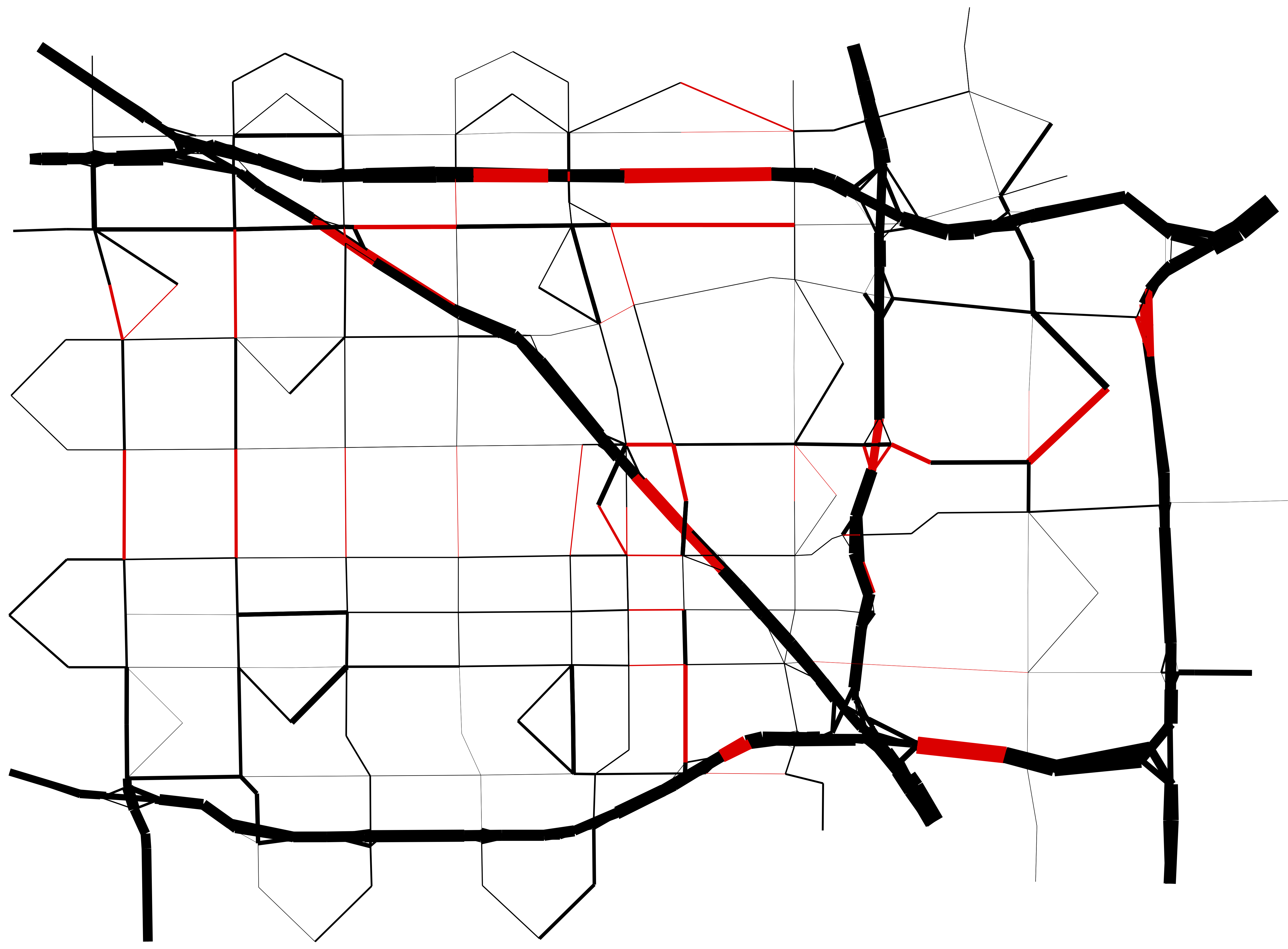}
         \caption{Recursive Bisection}
         \label{fig::anaheim_rb_sensors}
     \end{subfigure}
        \caption{Example of sensor placements (in red) using our heuristic (Greedy) and the Recursive Bisection (RB) baseline for the Anaheim road network. Edge traffic counts are represented by edge thickness. Unlike the baseline, our approach targets a few high-traffic paths.
        }
        \label{fig::sensors}
\end{figure}

\subsubsection{Noisy Estimates} Now we consider the setting where our model has access to noisy estimates of the ground-truth flows (e.g. based on the macro traffic demand model for a city \cite{tamin1989transport,willumsen1981simplified}). For each edge $e_i$, we simulate a noisy estimate $\mathbf{f}_i + \epsilon$ with noise $\epsilon \sim N(0, r\sigma)$ where $\sigma$ is the standard deviation of $\mathbf{f}$ and $r \in \mathbb{R}$ controls the amount of noise. Figure \ref{fig::noisyflows} shows the correlation between the predicted and original ground-truth flows for varying noise levels. The results show that our approach (Noisy Greedy) is robust to noise and often outperforms the baselines. As expected, Max Flow is also robust to noise---as $\epsilon$ is not relative to edge flow values, the noise has little effect on large flows.

\begin{figure}[ht!]
    \centering
    \includegraphics[width=0.45\textwidth]
    {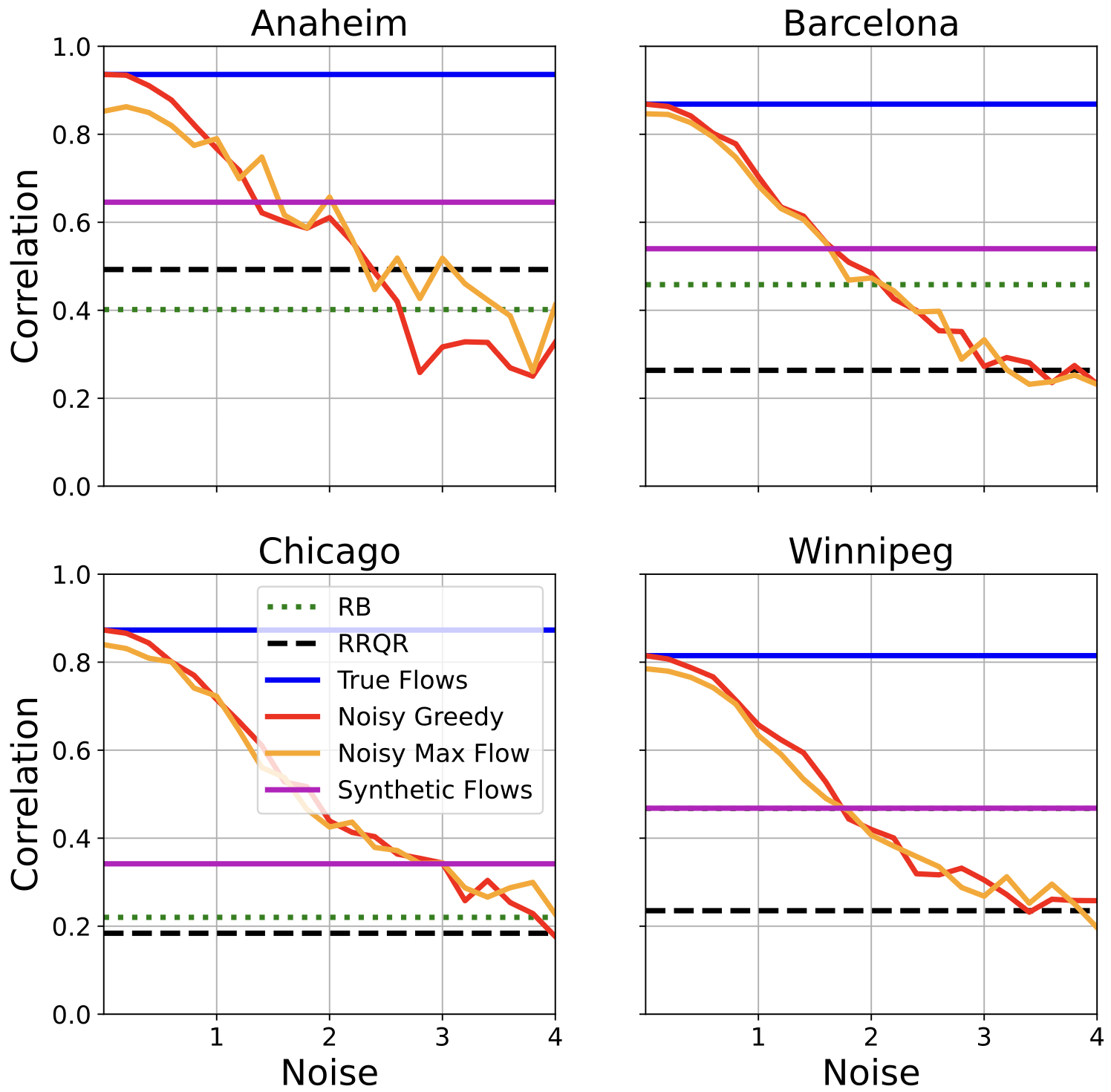}
    \caption{Conserved flow prediction results for sensor selection based on a noisy estimate of ground-truth flows for varying noise levels. The number of sensors placed ($k$) is fixed at 10\% of the edges. The sensor placement is quite robust to noise target values, outperforming the baselines under noise levels of up to 2$\times$ the standard deviation of the flows.
    }
    \label{fig::noisyflows}
\end{figure}

\subsubsection{Visualization} Figure \ref{fig::sensors} shows the sensors placed by our heuristic and the Recursive Bisection (RB) baseline (flow agnostic) for the Anaheim network. We notice that our approach identifies high-traffic paths while also minimizing redundancy.

\subsection{Speedups}

To evaluate the optimizations proposed in Sections \ref{sec:lazyevaluation} and \ref{sec:recursivecomputation}, we run Algorithm \ref{alg::greedy} using a brute force evaluation of the heuristic, with lazy evaluation, and with a combination of lazy evaluation and recursive computation. The brute force evaluation takes over an hour to choose 10\% of the edges in most of the networks tested, but lazy evaluation with recursive computation reduces the running time to a few seconds. Figure \ref{fig::speedups} shows the speedups. 

\begin{figure}[t!]
    \begin{center}
    \includegraphics[width=0.48\textwidth]{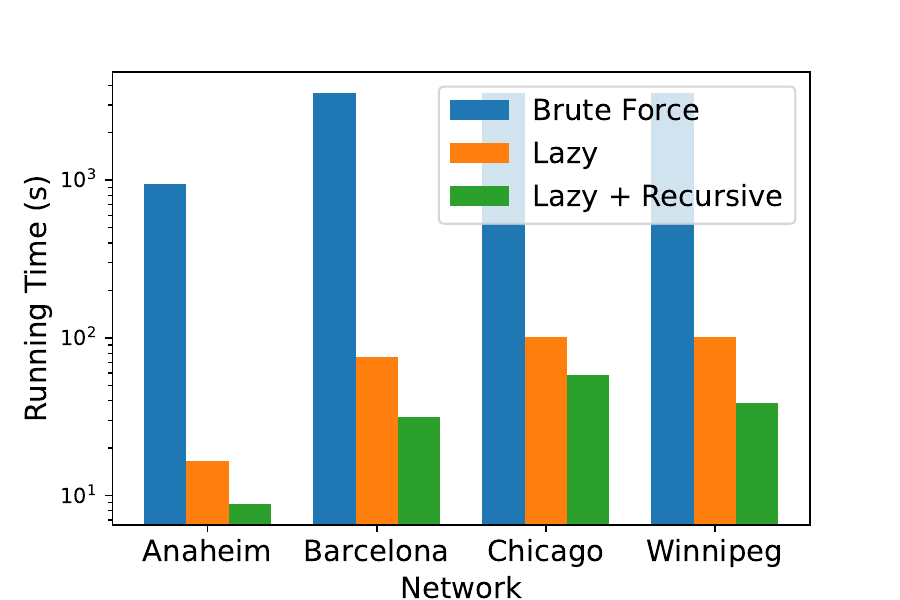}
    \end{center}

    \caption{Running time of Algorithm \ref{alg::greedy} using brute force, lazy evaluation without and with recursive computation for four networks. The number of sensors $k$ is set to 10\% of all edges. In all cases, combining lazy evaluation with the recursive evaluation provided significant speedups. The brute force solution was truncated at 1 hour for all networks except Anaheim.}
    \label{fig::speedups}
\end{figure}

\section{Conclusion and Ongoing Work}

We have proposed an approach for sensor placement for semi-supervised learning in flow networks that accounts for ground-truth values. We show that choosing the optimal set of sensors is NP-hard and provides an effective greedy heuristic for the problem. We also provide two optimizations to further speed-up the proposed heuristic. Our experiments show that our methods significantly outperform baselines that account only for the graph topology and that the proposed optimizations lead to significant running time savings. 


Our work opens several venues for future investigation. We are particularly interested in even faster flow-aware sensor placement algorithms based on matrix factorization. This requires the discovery of effective bases for the sparse representation of flows. In terms of applications, we will study how the proposed solutions can be applied to wastewater monitoring. 

\section{Acknowledgements}

This research was partially funded by the 100K Strong IFCE-Rice-SENAI Program on AI for Urban Sustainability and Resilience to Natural Disasters in the Americas, the US Department of Transportation (USDOT) Tier-1 University Transportation Center (UTC) Transportation Cybersecurity Center for Advanced Research and Education (CYBER-CARE), and the Rice Office of Undergraduate Research and Inquiry.


\bibliographystyle{plain}
\bibliography{bibliography}


\end{document}


\newtheorem{definition}{Definition}

%
\newcommand\relatedversion{}


\title{\Large Appendix \relatedversion}

\date{}

\maketitle


\fancyfoot[R]{\scriptsize{Copyright \textcopyright\ 2024 by SIAM\\
Unauthorized reproduction of this article is prohibited}}







\section{Reproducibility}
\label{appendix:flow}

Here we describe additional setup and implementation details of our experiments. Implementations of all algorithms, along with code for reproducing experiments, can be found at \url{https://github.com/arnav-b/sensor-placement}.









\subsection{Synthetic Flows \cite{activelearning}} We generate synthetic flows based on the singular vector decomposition of $\mathbf{B}$ as follows. In the SVD of $\mathbf{B}$ given above, $\mathbf{U} \in \mathbb{R}^{n \times n}$, $\mathbf{\Sigma} \in \mathbb{R}^{n \times m}$ contains the singular values $\rho_1,...,\rho_m$ of $\mathbf{B}$ along the main diagonal with $m-n$ extra columns of zeros, and $\mathbf{V} \in \mathbb{R}^{m \times m}$ contains the right singular vectors. These vectors $\mathbf{V}_1,...,\mathbf{V}_m$ give a basis for the edge flow space. The synthetic flow $\mathbf{f}$ can be computed as 

\begin{equation}
    \mathbf{f}  = \sum_{i = 1}^m \frac{b}{\rho_i + \epsilon} \mathbf{V}_i
\end{equation}
where $b \in \mathbb{R}$ controls the flow magnitude and $\epsilon \in \mathbb{R}_+$ controls the amount of non-conserved flow. Similar to \cite{activelearning}, we set $b = 20$ and $\epsilon = 0.1$ for our experiments.  

\clearpage

\bibliographystyle{plain}
\bibliography{bibliography}

